\newcommand{\suppress}[1]{}
\newtheorem{theorem}{Theorem}[section]
\newtheorem{claim}{Claim}[section]
\newtheorem{example}{Example}[section]
\newtheorem{definition}{Definition}[section]
\newtheorem{remark}{Remark}[section]
\def\e{\varepsilon}
\def\cA{\mbox{$\cal{A}$}}
\def\cW{\mbox{$\cal{W}$}}
\def\cX{\mbox{$\cal{X}$}}
\def\cY{\mbox{$\cal{Y}$}}
\def\cE{\mbox{$\cal{E}$}}
\def\cSi{{\bf \sigma}}
\def\cRa{\mbox{$\cal{R}$}}
\def\cTone{\cR_{\underline{1}}}
\def\cR{\mbox{$\cal{R}$}}
\def\cB{\mbox{$\cal{B}$}}
\def\cC{\mbox{$\cal{C}$}}
\def\cM{\mbox{$\cal{M}$}}
\def\cD{\mbox{$\cal{D}$}}
\def\cN{\mbox{$\cal{N}$}}
\def\cT{\mbox{$\cal{T}$}}
\def\e{\varepsilon}
\def\bx{{\bf x}}
\def\by{{\bf y}}
\newcommand{\ulR}{{{\underline{R}}}}
\newcommand{\ulM}{{{\underline{M}}}}
\def\uls{{\underline{\sigma}}}
\def\ulv{{\underline{v}}}
\def\ula{{\underline{a}}}
\def\ulx{{\underline{x}}}
\def\uly{{\underline{y}}}
\def\realnumbers{\mathbb{R}}
\def\HH{H}
\def\SS{S}
\def\hh{h}
\def\ss{s}
\def\JJ{\Lambda}
\def\01{\{0,1\}}
\newcommand{\remove}[1]{}
\begin{document}
\IEEEoverridecommandlockouts
\title{Beyond Capacity: The Joint Time-Rate Region}

\author{Michael Langberg\ \ \ \ \ \ \ \ \  \ \ \ \ \ \ \ \ \ 
Michelle Effros
\thanks{M. Langberg is with the Department of Electrical Engineering at the University at Buffalo (State University of New York).  
Email: \texttt{mikel@buffalo.edu}}
\thanks{M. Effros is with the Department of Electrical Engineering at the California Institute of Technology.
Email: \texttt{effros@caltech.edu}}
\thanks{This work is supported in part by NSF grants CCF-1817241 and CCF-1909451. 
}
}


\maketitle

\begin{abstract}
The traditional notion of capacity studied in the context of memoryless network communication builds on the concept of block-codes and requires that, for sufficiently large blocklength $n$, all receiver nodes simultaneously decode their required information  after $n$ channel uses. 
In this work, we generalize the traditional capacity region by exploring communication rates achievable when some receivers are required to decode their information before others, at different predetermined times; referred here as the {\em time-rate} region.
Through a reduction to the standard notion of capacity, we present an inner-bound on the time-rate region.
The time-rate region has been previously studied and characterized for the memoryless broadcast channel (with a sole common message) under the name {\em static broadcasting}.  
\end{abstract}

\section{Introduction} 
\label{sec:intro}
In the context of communication over multi-source multi-terminal memoryless channels (i.e., networks), one traditionally seeks the design of communication schemes that, for a given blocklegth $n$, allow the successful decoding of source information at receiver nodes after $n$ channel uses. 
Roughly speaking,\footnote{All concepts mentioned in this section are defined in detail in Section~\ref{sec:model}.} rate vector $\ulR = (R_1,\dots,R_k)$, is said to be achievable with blocklength $n$ and decoding error $\e>0$ over a given $k$-source network, if for uniformly distributed $R_i n$-bit messages $m_i$ (for $i=1,\dots,k$) there exists a communication scheme that after $n$ channel uses allows all receivers to decode their required source information with success probability at least $1-\e$.
The capacity region of the communication problem at hand describes the closure of all rate vectors $\ulR$ achievable with asymptotic blocklength and vanishing error
(see, e.g., \cite{el2011network}).

In this work, we generalize the notion of capacity and study communication in the setting in which some network nodes are required to decode their information before others.  
More precisely, we study communication schemes of blocklegnth $n$ in which receiver $v_j$ is required to decode the $R_i n$-bit message $m_i$ after $\sigma_{ij}n$ channel uses, where $\sigma_{ij}$ is a predetermined {\em time constraint} less than or equal to 1.
The traditional capacity region is captured when all constraints $\sigma_{ij}$ equal 1.
Different values for  parameters $\sigma_{ij}$ represent settings in which certain receivers are required to decode earlier than others due to, e.g., time-sensitive information, physical receiver constraints such as battery life, or computational receiver constraints that are due to parallel communication or processing tasks.
For example, in the setting of IoT, a base station with side information including the remaining battery life of the sensors under its control may design broadcast codes that allow earlier decoding at low-battery sensors; in disaster areas, a base station with event information may design broadcast codes that allow earlier decoding at receivers close to a critical event.

To represent the achievable rates $\ulR=(R_1,\dots,R_k)$ under 0/1 demand matrix $S$ and time constraints $\uls = (\sigma_{ij} : \ss_{ij}=1)$, in this work we define and study the joint {\em time-rate} region $\cT$ which is the closure of vector pairs $(\uls,\ulR)$ for which rate $\ulR$ is achievable with time constraints $\uls$. Here, for the communication problem at hand,  demand matrix $S=[s_{ij}]$ sets $s_{ij}=1$ if receiver $v_j$ wants message $m_i$ and 0 otherwise.
Rate $\ulR$ is achievable with time constraints $\uls$ on network $\cN$, if there exists a blocklength-$n$ communication scheme for $\cN$ such that for $\ss_{ij}=1$, receiver $v_j$ can decode the $R_i n$-bit message $m_i$ (with high probability) after $\sigma_{ij}n$ channel uses (see Section~\ref{sec:model}, and in particular Definition~\ref{def:timerate}, for formal details).

It is convenient to represent the time-rate region by considering its {\em slices} with respect to $\uls$ or alternatively with respect to $\ulR$. For the former, consider expressing $\cT$ by the collection $\{\cRa_\uls\}_{\uls}$.
Here, for any setting of time constraints $\uls$, $\cRa_\uls=\{\ulR \mid (\uls,\ulR) \in \cT\}$ is the closure of the set of rate vectors $\ulR$ that are achievable with blocklength $n$. 
Achievability implies that for $(i,j)$ with $\ss_{ij}=1$ receiver $v_j$ can decode message $m_i$ (with high probability) after $\sigma_{ij}n$ channel uses. Using this notation, the standard capacity region, in which all time constraints $\sigma_{ij}$ equal 1, is denoted by $\cTone$.
Given $\uls$, the region $\cRa_\uls$ captures the {\em tradeoff in message rates} $\ulR$ achievable given a collection of   decoding time-constraints implied by $\uls$. 

For the latter, one may express $\cT$ by the collection $\{\cSi_\ulR\}_{\ulR}$, where for any rate vector $\ulR=(R_1,\dots,R_k)$, $\cSi_\ulR=\{\uls \mid (\uls,\ulR) \in \cT\}$ is the closure of the set of time-constraints $\uls$ that allow the communication of rate-$\ulR$ messages.
Given a fixed rate vector $\ulR$, the region $\cSi_\ulR$ represents the {\em tradeoff in decoding times} $\uls$ supporting the communication of messages of predetermined rate represented by $\ulR$. 

The characterizations $\cRa_\uls$ and $\cSi_\ulR$ are equivalent in the sense that each suffices to recover the time-rate region $\cT$.
While the perspectives represented by $\cRa_\uls$ and $\cSi_\ulR$ both have operational significance, in this work, we focus mainly on the study of $\cRa_\uls$.
Like the standard capacity region $\cTone$, for any $\uls$, the region $\cRa_\uls$ is convex by the usual time-sharing argument (here, codes should be interleaved). 
As a result, $\cRa_\uls$ lends itself more naturally to our analysis.
This is in contrast to the region $\cSi_\ulR$, which is not necessarily convex (since time sharing between a code that delivers rate $\ulR$ at time constraints $\uls$ and a code that delivers rate $\ulR$ at time constraints $\uls'$ does not always create a code that delivers rate $\ulR$ at time constraints $\alpha\uls+(1-\alpha)\uls'$).
The latter is shown, e.g.,  in \cite{shulman2000static, Shulman03}, for the two-terminal broadcast channel.

In this work, we study the joint time-rate region for general multiple-source multiple-terminal memoryless networks $\cN$. 
This manuscript is structured as follows.
In Section~\ref{sec:model}, we define our model in detail.  
In Section~\ref{sec:prior}, we outline prior related works, focusing on {\em static broadcasting} (initiated in \cite{shulman2000static, Shulman03}) which studies the time-rate region in the single-source broadcast setting (i.e., $k=1$), and rateless codes, e.g., \cite{luby2002lt,mackay2005fountain,shokrollahi2006raptor}.
Our main result appears in Section~\ref{sec:inner}, where, given any network $\cN$ and set of time constraints $\uls$, we design an inner bound on $\cRa_\uls(\cN)$ through a reduction to the study of {\em traditional} capacity regions on related networks.
To put our results in perspective, in Section~\ref{sec:results_example}, we present a single-message network for which our inner bound is not tight.
Finally, we conclude in Section~\ref{sec:conclude}.

\section{Model} 
\label{sec:model}


%

We use the following notational conventions. Scalars are represented by lowercase letters, e.g., $v$; vectors by underlined lowercase letters, e.g., $\ulv$; matrices and sets are represented by uppercase letters; script letters typically denote alphabets, e.g., $\cX$, or more complex structures; and random variables are denoted in bold.
For a positive real value $r$, $[r]$ represents the set $\{1,2,\dots,\lfloor r \rfloor\}$.

\subsection{Memoryless Communication Channel}

Our model for a given discrete memoryless communication channel $\cW$ comprises:
\begin{itemize}
\item {\bf Nodes:} a collection of $\ell$ network communication nodes $V=(v_j : j \in [\ell])$.
\item {\bf Alphabets:} each $v_j \in V$ observes channel outputs from alphabet $\cY_j$ and transmits channel inputs in alphabet $\cX_j$. 
\item {\bf Channel:} Let $W(\uly|\ulx)$ be a conditional probability distribution of $\uly = (y_j : j \in [\ell]) \in \prod_{j \in [\ell]} \cY_j$ given $\ulx = (x_j : j \in [\ell]) \in \prod_{j \in [\ell]} \cX_j$. 
\end{itemize}
Thus, the channel $\cW$ is represented by the tuple 
$$\cW=(\prod_{j \in [\ell]} \cX_j,W(\uly|\ulx),\prod_{j \in [\ell]} \cY_j).$$

\subsection{The Message Set, Message Side-information, and Requirements}
We denote the collection of source messages to be communicated over channel $\cW$ by $\ulM=(m_i : i \in [k])$.
The message side-information is defined by the $k \times \ell$ binary matrix  $\HH=[\hh_{ij}]$ for which $\hh_{ij}=1$ if and only if message $m_i$ is available to node $v_j$ at the start of the communication process. Similarly demand matrix $\SS=[\ss_{ij}]$ is a $k \times \ell$ binary matrix for which $\ss_{ij}=1$ if and only if node $v_j$ requires message $m_i$.

\subsection{Network Communication Problem}
Combining the elements above, a network communication problem $\cN$ is defined by the tuple $(\cW,\ulM,\HH,\SS)$.

\subsection{Network code}
Let $\cN=(\cW,\ulM,\HH,\SS)$ be a network communication problem as above.
Let $n$ be an integer and $\ulR=(R_1,\dots,R_k)$ be a rate vector.
For $i \in [k]$, let $\cM_i=[2^{R_i n}]$ be the message alphabet of $m_i \in \ulM$.
A $(\ulR,n)$ code $\cC$ for communication problem $\cN$ consists of the following components:
\begin{itemize}
\item {\bf Encoders:} with each node $v_j \in V$ we associate a time-varying encoder, which at time $\tau$ is defined as
$$
E_j^\tau: \prod_{\hh_{ij}=1}\cM_i  \times \cY_j^{[\tau-1]} \rightarrow \cX_j.
$$
\item {\bf Decoders:} with each node $v_{j} \in V$ we associate a time-varying decoder, which at time $\tau$ is defined as
$$
D_j^\tau: \prod_{\hh_{ij}=1}\cM_j  \times \cY_j^{[\tau]} \rightarrow \prod_{\ss_{ij}=1}\cM_i.
$$
\end{itemize}
Thus a code $\cC$ for $\cN$ is defined by the tuple 
$$(\cE,\cD) \triangleq ((E_j^\tau : j \in [\ell], \tau \in \mathbb{N}),(D_j^\tau: j \in [\ell], \tau \in \mathbb{N})).$$

\noindent
{\bf Achievability:} Let $\uls = (\sigma_{ij}: \ss_{ij} = 1)$  for $\sigma_{ij} >0$ and let $\e>0$.
A $(\ulR,n)$ code $\cC$ for network communication problem $\cN$ is said to be an $(\e,\uls,\ulR,n)$ code if it allows successful decoding with probability at least $1-\e$.
Specifically, given independent messages ${\bf m}_i$, $i \in [k]$, uniformly distributed over $\cM_i = [2^{R_i n}]$, operating code $\cC$ over channel $\cW$ yields a time-$\tau$ channel output ${\bf{\uly}}^\tau$ for which 
{\small{
$$
\Pr[\forall (i,j) \ \text{s.t.} \ s_{ij}=1, D_j^{\sigma_{ij}n}(({\bf m}_i : \hh_{ij}=1), {\bf y}_j^{[\sigma_{ij} n]})={\bf m}_i),
$$}}
is at least $1-\e$. Here, the probability is taken over the randomness of the messages and the channel $W$.

\begin{remark}
In our definition above, we consider time parameters $\uls=(\sigma_{ij}: s_{ij}=1)$ for any positive values of $\sigma_{ij}$.
Although this seemingly generalizes the discussion in Section~\ref{sec:intro} in which the collection of time-parameters $\sigma_{ij}$ has a maximum value of 1 (i.e., the setting in which some receiver nodes decode at time $n$, and others earlier), it is not hard to verify that the two definitions are equivalent, and we use the former to simplify our analysis.
More precisely, our definitions imply the following tradeoff between the blocklength $n$ and the pair $(\uls,\ulR)$.
\begin{claim}
\label{claim:trade}
 Let $\cN$ be a network communication problem. 
 If $\cC$ is an $(\e,\uls,\ulR,n)$ code for $\cN$, then for any $\alpha >0$, $\cC$ is also an $(\e,\alpha\uls,\alpha\ulR,\frac{n}{\alpha})$  code for $\cN$.
\end{claim}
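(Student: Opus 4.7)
The claim asserts that the notion of an $(\e,\uls,\ulR,n)$ code is invariant under the rescaling $(\uls,\ulR,n) \mapsto (\alpha\uls,\alpha\ulR,n/\alpha)$, so the plan is to show that the very same code $\cC$ — with the same encoders $E_j^\tau$ and decoders $D_j^\tau$, viewed as time-indexed maps on the same input/output alphabets — meets the defining conditions of both codes. The proof is essentially a relabeling argument and requires no new coding-theoretic content.

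First, I would observe that the definition of an $(\e,\uls,\ulR,n)$ code depends on the parameters $\uls,\ulR,n$ only through three quantities: (i) the message alphabet sizes $|\cM_i|=2^{R_i n}$, (ii) the decoding times $\sigma_{ij}n$ appearing in the decoder $D_j^{\sigma_{ij}n}$, and (iii) the error tolerance $\e$. All other ingredients (the channel $\cW$, the side-information pattern $\HH$, the demand matrix $\SS$, and the encoders/decoders themselves, which are defined for every time $\tau \in \mathbb{N}$) are intrinsic to $\cN$ and $\cC$ and are unaffected by the rescaling.

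Next, I would verify that each of (i)--(iii) is preserved under the transformation $(\uls,\ulR,n) \mapsto (\alpha\uls,\alpha\ulR,n/\alpha)$: the message alphabet size of $m_i$ becomes $2^{(\alpha R_i)(n/\alpha)} = 2^{R_i n}$, the decoding time of $m_i$ at $v_j$ becomes $(\alpha\sigma_{ij})(n/\alpha) = \sigma_{ij}n$, and the error probability $\e$ is unchanged. Thus every message alphabet, every decoding deadline, and the success-probability requirement match those of the original code, so the exact same encoder/decoder family $\cC$ satisfies the achievability condition for $(\e,\alpha\uls,\alpha\ulR,n/\alpha)$ at $\cN$.

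The only potential subtlety — and it is minor — is the floor convention $\cM_i=[2^{R_i n}]=\{1,\dots,\lfloor 2^{R_i n}\rfloor\}$ and the implicit integrality of the time indices $\sigma_{ij}n$. Because the rescaling preserves the products $R_i n$ and $\sigma_{ij} n$ exactly, no new rounding issues are introduced: the floors in $\lfloor 2^{R_i n}\rfloor$ and the integer time steps at which the decoders fire are literally identical in the two descriptions. I would state this observation as a one-line remark and conclude the proof.
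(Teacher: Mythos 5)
Your proof is correct and is exactly the verification the paper intends: the paper states Claim~\ref{claim:trade} without proof, treating it as immediate from the definitions, and your observation that the defining quantities $2^{R_i n}$, $\sigma_{ij} n$, and $\e$ are invariant under $(\uls,\ulR,n)\mapsto(\alpha\uls,\alpha\ulR,n/\alpha)$ is precisely that intended relabeling argument. Your remark on the rounding/integrality convention is a sensible extra precaution that the paper itself glosses over.
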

\end{remark}

\subsection{Time-rate region}
We now define the time-rate region $\cT(\cN)$ of network communication problem $\cN$.
\begin{definition}
[Joint time-rate region]
\label{def:timerate}
The joint time-rate region $\cT(\cN)$ of communication problem $\cN$ is the collection of $(\uls,\ulR)$  such that for every $\e>0$ and $\delta>0$, for all $n$ sufficiently large there exists an $(\e,\uls+\delta,\ulR- \delta,n)$ code for $\cN$.
Here, for a vector ${\underline{v}}$ and a scalar $\delta$, ${\underline{v}}-\delta$ represents the vector with entires $v_i-\delta$.
Equivalently, one can express the joint time-rate region $\cT$ by the collection $\{\cSi_\ulR\}_{\ulR \in \realnumbers^{k}}$ where for each $\ulR \in \realnumbers^{k}$,
$$
\cSi_\ulR(\cN)=\{\uls \mid (\uls,\ulR) \in \cT(\cN)\},
$$
or the collection$\{\cRa_\uls\}_{\uls}$ where for each $\uls \in \realnumbers_{>0}^{\|\SS\|_0}$
$$
\cRa_\uls(\cN)=\{\ulR \mid (\uls,\ulR) \in \cT(\cN)\}.
$$
Here $\|\SS\|_0$ represents the number of entries in $\SS$ that equal 1, and $\realnumbers_{>0}$ represents the positive reals. 
Using the latter notation, the traditional network capacity region, e.g., \cite{el2011network}, of $\cN$ equals $\cTone(\cN)$
for ${\underline{1}} = (1,1,1, \dots, 1)$. 
\end{definition}

\section{Related work}
\label{sec:prior}

\subsection{Static Broadcasting}
\label{sec:static}
Previous work on the time-rate region (defined in a different but equivalent manner) under the name {\em static broadcasting} treats the broadcast channel with a common message \cite{shulman2000static, Shulman03} and single-source multicast network coding \cite{lun2008coding}. In \cite{lun2008coding}, the optimal decoding time $\sigma_j$ at terminal $v_j$ is characterized by the corresponding min-cut to the source. 
In \cite{shulman2000static, Shulman03}, a single transmitter transmits a single message $m$ to a pair\footnote{For simplicity of presentation, we consider only two terminal nodes. In \cite{shulman2000static, Shulman03}, the broadcast setting with multiple terminal nodes (not necessarily two) is studied. Our discussion generalizes to multiple terminals as well.} of terminal nodes, here denoted by $t_1$ and $t_2$, over a broadcast channel $(\cX,W(y_1,y_2|x),\cY_1 \times \cY_2)$.
%
Time parameter $\uls = (\sigma_1,\sigma_2)$ represents the decoding times for message $m$ at terminals $t_1$ and $t_2$, respectively.


The time-rate region $\cT(\cN)$ is characterized in \cite{shulman2000static, Shulman03} by the collection of all $(\sigma_{1},\sigma_{2},R)$ for which there exists an $n$ and a collection of distributions $(\bx^\tau \sim p^\tau_x: \tau \in \mathbb{N})$ over alphabet $\cX$  such that 
\begin{align*}
Rn &\leq \sum_{\tau=1}^{\sigma_{1} n}I(\bx^\tau;\by_1^\tau),\\
Rn &\leq \sum_{\tau=1}^{\sigma_{2} n}I(\bx^\tau;\by_2^\tau).
\end{align*}

To simplify the representation of $\cT(\cN)$ in \cite{shulman2000static, Shulman03}, let $W(y_1,y_2|x)$ equal the channel pair
$W_1(y_1|x), W_2(y_2|x).$
Let $C_1$ and $C_2$ be the point-to-point capacities of $W_1$ and $W_2$, repectively.
Using the concavity of mutual information (over the distributions $\{p^\tau_x\}_{\tau}$), the region $\cT(\cN)$ can be described by the collection of all $(\sigma_{1},\sigma_{2},R)$ for which there exists a distribution $\bx \sim p_x$ over alphabet $\cX$ such that for $\sigma_{1} \leq \sigma_{2}$,  
\begin{align*}
R &\leq \sigma_{1} I(\bx;\by_1),\\
R &\leq \sigma_{1} I(\bx;\by_2)+(\sigma_{2}-\sigma_{1})C_2
\end{align*}
and for $\sigma_{1} \geq \sigma_{2}$, 
\begin{align*}
R &\leq \sigma_{2} I(\bx;\by_2),\\
R &\leq \sigma_{2} I(\bx;\by_1)+(\sigma_{1}-\sigma_{2})C_1
\end{align*}
This {\em two-phase} representation of $\cT(\cN)$ resembles the inner-bound methodology presented in the main result of this work (Theorem~\ref{the:N_lambda}) in which we concatenate block-codes corresponding to the different phases of communication.
For the case at hand, e.g., for $\sigma_1 \leq \sigma_2$, during the first phase of time-steps up to $\sigma_1 n$, terminal $t_1$ is required to decode the message $m$ while terminal $t_2$ receives {\em partial} information on $m$.
During the second phase of the remaining time steps up to $\sigma_2 n$, terminal $t_2$ is required to obtain additional information that allows successful decoding of $m$.

\subsection{Rateless codes}
 {\em Rateless codes}, in which reliable decoding does not occur at a predetermined time (i.e., blocklength) $n$ but may vary depending on the channel realization, are 
somewhat related to our problem.
See, for example, \cite{luby2002lt,mackay2005fountain,shokrollahi2006raptor} in the context of the erasure channel, \cite{cerf1983dod, cerf2005protocol,postel1980user,park1997highly,zhao2001tapestry,fall2011tcp} in the context of adaptive routing protocols for networks, and \cite{castura2005rateless,uppal2007practical,molisch2007performance,liu2009fountain} 
in the context of discrete memoryless networks such as multiple access, relay, and broadcast channels. 

The model and measure of quality in rateless codes, however,  differ significantly from our model.
Specifically, we assume decoding times $\uls$ that are fixed for each demand (described by a message-receiver pair $(i,j)$) and characterize message rates $\ulR$ achievable  with high probability over the channel realization.

\section{An inner bound on the time-rate region}
\label{sec:inner}
Given a network problem $\cN$, the discussion that follows proposes a {\em time-expansion} $\cN_1,\dots,\cN_{\JJ}$ of $\cN$ and then uses the standard capacities of networks in the time expansion (i.e., $\cTone(\cN_\lambda)$ for $\lambda \in [\JJ]$) to derive an inner bound on the time-rate region $\cT(\cN)$.


Consider any $(\uls,\ulR)$.
We start by defining a network $\cN_0=(\cW_0,\ulM_0,\HH_0,\SS_0)$ and a corresponding pair $(\uls_0,\ulR_0)$ such that 
$$
\ulR_0 \in \cRa_{\uls_0}(\cN_0) \ \ \Leftrightarrow \ \ \ulR \in \cRa_\uls(\cN).
$$
We then expand $\cN_0$ to $\cN_1,\dots,\cN_{\JJ}$ such that one can express an inner bound on $\cRa_{\uls_0}(\cN_0)$ by the sequence of standard capacity regions $(\cTone(\cN_\lambda) : \lambda \in \JJ)$. 
Details on our reductions follow.

\vspace{2mm}

\noindent
{\bf $\bullet$ The network $\cN_0$:}
Let $\cN=(\cW,\ulM,\HH,\SS)$ and let $\uls = (\sigma_{ij} : \ss_{ij}=1)$ be a collection of time-parameters.
We design network communication problem $\cN_0=(\cW_0,\ulM_0,\HH_0,\SS_0)$ using $\uls$.
Let $\JJ$ be the number of distinct decoding times in vector $\uls$.
The major difference between $\cN$ and $\cN_0$ is in the message sets $\ulM$ and $\ulM_0$ and in the corresponding decoding times. 

For each message $m_i \in \ulM$ in $\cN$, we design a partition $(m_{(i,\ula)}: \ula \in \cA_i)$ of $m_i$ into sub-messages $m_{(i,\ula)}$ to be transmitted in $\cN_0$.
For each $i$, the vector $\ula=(a_1,\dots,a_\ell)$ falls in the set $\cA_i \subseteq [\JJ+1]^\ell$ to be defined shortly.
The message set $\ulM_0$ of $\cN_0$ consists of messages $(m_{(i,\ula)}: i \in [k], \ula \in \cA_i)$.

\begin{figure}[t!]
\centering
\includegraphics[scale=0.25]{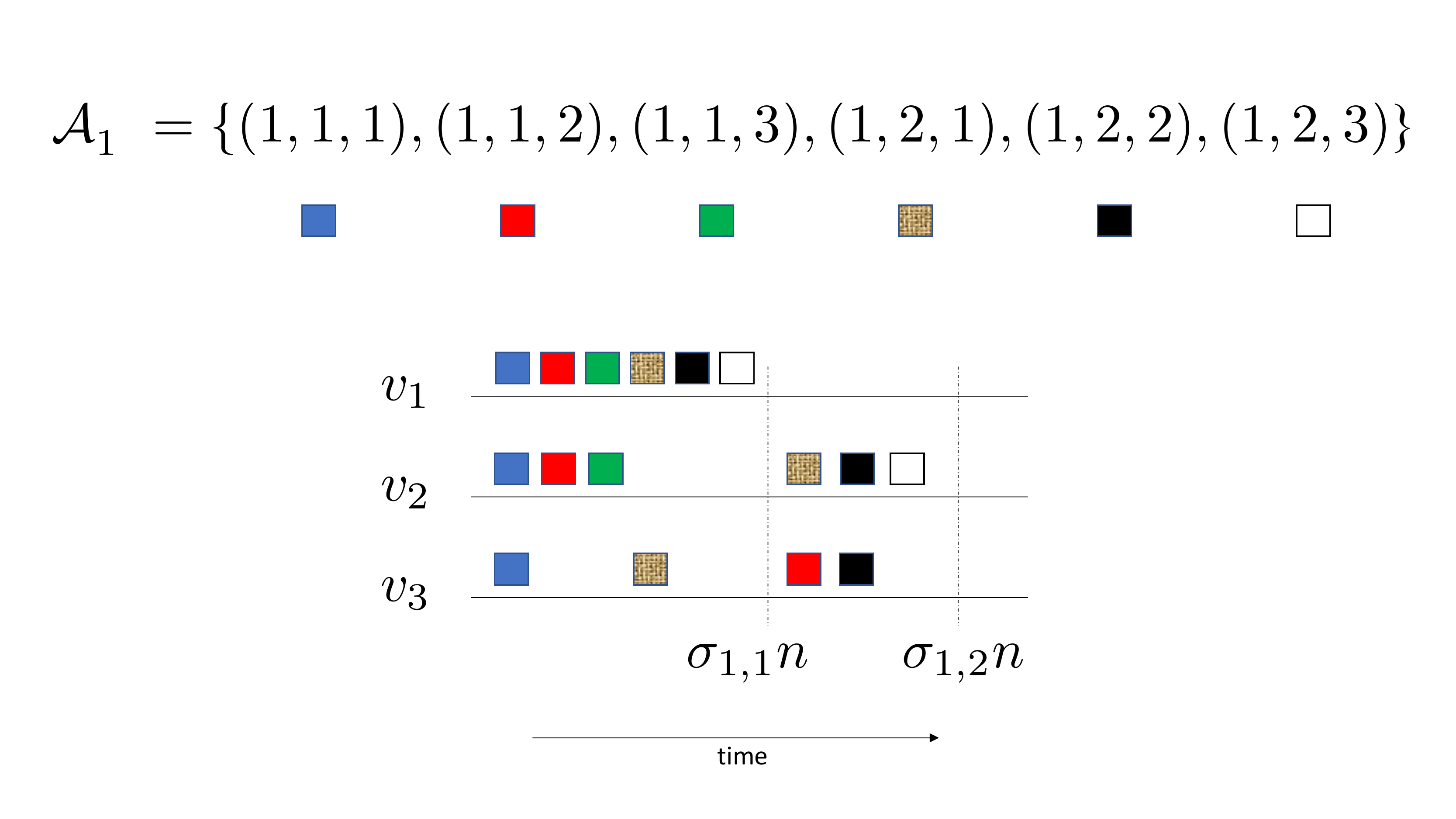}
\caption{A depiction of the message set $\ulM_0$ of $\cN_0$ from Example~\ref{ex:N0}. 
}
\label{fig:M0}
\end{figure}

Roughly speaking, in $\cN_0$ we require the sub-messages $m_{(i,\ula)}$ of message $m_i$ to be decoded at or before the decoding times for $m_i$ in $\cN$. 
Specifically, for $i \in [k]$, $\ula=(a_1,\dots,a_\ell) \in \cA_i$, and $j \in [\ell]$, the parameter $a_j$ represents the updated decoding time for $m_{(i,\ula)}$ at $v_j$.
The partition of $m_i$ and the updated decoding times,  govern our inner-bound on $\cT(\cN)$ through $\cN_0$ and $\cN_1,\dots,\cN_\JJ$. 
Details follow.

Consider sorting the distinct values of time-parameters in $\uls$ in increasing order. Let $\sigma_1< \sigma_2 < \dots < \sigma_\JJ$ be these distinct values. For $\lambda \in [\JJ]$, we define the set $\Delta(\lambda)$ to include all pairs $(i,j)$ such that $\sigma_{ij}=\sigma_\lambda $. We set $\Delta(\JJ+1) = ((i,j): \ss_{ij}=0)$ to be all remaining pairs (not included in $\Delta(\lambda)$ for $\lambda \in [\JJ]$).

For $i \in [k]$, the set $\cA_i$ is defined to be all vectors $\ula=(a_1,\dots,a_\ell)$ that satisfy for $j \in [\ell]$:
\begin{itemize}
\item If $\ss_{ij}=1$ and $\sigma_{ij} = \sigma_\lambda$, then $a_j \leq \lambda$; or
\item  If $\ss_{ij}=0$ then $a_j \leq \JJ+1$.
\end{itemize}
In message $m_{(i,\ula)}$, the $j$'th entry $a_j$ of $\ula$ corresponds to the decoding time of $m_{(i,\ula)}$ at receiver $v_j$ in $\cN_0$. 
If $a_j=\lambda$ we require decoding time $\sigma_\lambda$ at $v_j$. To guarantee that a code for $\cN_0$ will also imply one for $\cN$, we require that all parts $m_{(i,\ula)}$ of $m_i$ will be decoded at  $v_j$ at or before their required time $\sigma_{ij}$ in $\cN$.
This latter requirement is met by setting $a_j \leq \lambda$ when $\sigma_{ij}=\sigma_\lambda$ (according to the first bullet above).
Moreover, in $\cN_0$, we may require that part  $m_{(i,\ula)}$ of message $m_i$ be decoded at $v_j$, even when message $m_i$ is not required at all by $v_j$ in $\cN$ (i.e., $\ss_{ij}=0$).  
Namely, for $\ss_{ij}=0$ and a given message $m_{(i,\ula)}$, we may set $a_j$ in $\ula$ to be equal to $\lambda \in [\JJ]$ to represent a decoding time of $\sigma_\lambda$.
The critical point here is, that for $\ss_{ij}=0$, we can choose $m_{(i,\ula)}$ with $a_j = \JJ+1$ to represent that $v_j$ does not require $m_{(i,\ula)}$ or with $a_j \in [\JJ]$ to allow for the possibility that it might be useful for $v_j$ to learn some or all of $m_i$ (for example to be used as side-information).

We now formalize the discussion above.
We define $\cW_0$, $\ulM_0$, $\HH_0$, $\SS_0$ and $\uls_0$. 
\begin{itemize}
\item The network $\cW_0$ is identical to $\cW$.
\item The message set $\ulM_0$ is equal to $(m_{(i,\ula)}: i \in [k], \ula \in \cA_i)$.
\item Define matrix $\SS_0$ with entries $\ss^{(0)}_{(i,\ula),j}$ and vector $\uls_0$ with entries $\sigma^{(0)}_{(i,\ula),j}$ as follows.
For $i \in [k]$, $\ula=(a_1,\dots,a_\ell) \in \cA_i$, and $j \in [\ell]$, if $a_j=\lambda \in [\JJ]$ set 
$\ss^{(0)}_{(i,\ula),j}=1$ and $\sigma^{(0)}_{(i,\ula),j}=\sigma_\lambda$;
otherwise,  set $\ss^{(0)}_{(i,\ula),j}=0$. 
\item Define matrix $\HH_0$ with entries $\hh^{(0)}_{(i,\ula),j}$ as follows.
For $i \in [k]$, $\ula \in \cA_i$, and $j \in [\ell]$, define $\hh^{(0)}_{(i,\ula),j}=1$ if and only if $h_{ij}=1$ and $\hh^{(0)}_{(i,\ula),j}=0$ otherwise. 
\end{itemize}

An example of our definitions on a $3$-node network are given below and depicted in Figure~\ref{fig:M0}.
\begin{example}
\label{ex:N0}
Consider a network $\cN$ with nodes $v_1$, $v_2$, and $v_3$, and a single message $m_1$. 
Suppose that $m_1$ is required by node $v_1$ at time-parameter $\sigma_{1,1}$ and by node $v_2$ at time-parameter $\sigma_{1,2}$; $m_1$ is not required by node $v_3$.
Let $\sigma_{1,1}<\sigma_{1,2}$; then $\JJ=2$ and $\sigma_1=\sigma_{1,1}$, $\sigma_2=\sigma_{1,2}$. 
Message $m_1$ is partitioned into sub-messages $m_{1,\ula}$ for $\ula \in \cA_1 \subseteq [\JJ+1]^\ell=[3]^3$. The set $\cA_1$ includes all $\ula=(a_1,a_2,a_3)$ such that $a_1 \leq 1$ (as $\sigma_1 = \sigma_{1,1}$), $a_2 \leq 2$ (as $\sigma_2 = \sigma_{1,2}$), and $a_3 \leq \JJ+1=3$ (as $m_1$ is not required by node $v_3$ in $\cN$). 
In Figure~\ref{fig:M0}, each sub-message $m_{1,\ula}$ is depicted by a colored box. 
For example, the blue box represents the message $m_{1,(1,1,1)}$. 
For each message $m_{1,\ula}$ the vector $\ula$ specifies the decoding time of  message $m_{1,\ula}$ at nodes $v_1,v_2,v_3$.
That is, if $\ula=(a_1,a_2,a_3)$ then for each $j$ with $a_j \leq \JJ$, $m_{1,\ula}$ is to be decoded by node $v_j$ by time $\sigma_{a_j}n$ while for $a_j=\JJ+1=3$, $m_{1,\ula}$ is not required by node $v_j$. 
For example, message $m_{1,(1,2,3)}$, represented by a white box, must be decoded by time $\sigma_1 n=\sigma_{1,1}n$ at node $v_1$,  by time $\sigma_2 n=\sigma_{1,2}n$ at node $v_2$, and is not required by $v_3$.
Thus a white box is placed before $\sigma_{1,1}n$ on the horizontal {\em time-line} for $v_1$ and before $\sigma_{1,2}n$ on the time-line for $v_2$ (but is absent from the time-line for $v_3$).
Similarly, message $m_{1,(1,2,1)}$, represented by a brown box, is to be decoded by time $\sigma_1 n=\sigma_{1,1}n$ at node $v_1$,  by time $\sigma_2 n=\sigma_{1,2}n$ at node $v_2$, and by time $\sigma_1 n=\sigma_{1,1}n$ at node $v_3$. The locations of the brown  box on the time-lines of nodes $v_1,v_2,v_3$ appear accordingly.
\end{example}

We now have the following theorem:
\begin{theorem}
\label{the:N0}
Let $\cN=(\cW,\ulM,\HH,\SS)$.
Let $\uls = (\sigma_{ij}: \ss_{ij}=1)$ be a collection of time parameters.
Let $\JJ$ be the number of distinct values in $\uls$.
Let $\cN_0=(\cW_0,\ulM_0,\HH_0,\SS_0)$ and $\uls_0$ be defined as above.
For rate vector $\ulR_0 = (R^{(0)}_{(i,\ula)} : i \in [k], \ula \in \cA_i)$, let 
$\ulR = (R_i : i\in[k])$ be defined by $R_i=\sum_{\ula}R^{(0)}_{(i,\ula)}$ for all $i \in [k]$.
Then,
$$
\ulR_0 \in \cRa_{\uls_0}(\cN_0) \ \ \Rightarrow\ \ \ulR \in \cRa_\uls(\cN) .
$$
Moreover, for rate vector $\ulR = (R_i : i\in[k])$, let $\ulR_0 = (R^{(0)}_{(i,\ula)} : i \in [k], \ula \in \cA_i)$ satisfy $R^{(0)}_{(i,\ula)}=R_i$ for all $i \in [k]$ and the unique $\ula=(a_1,\dots,a_\ell)$ in which $a_j = \lambda$  if $\sigma_{ij}=\sigma_\lambda$ and $a_j=\JJ+1$ otherwise, and let $R^{(0)}_{(i,\ula)}=0$ otherwise, then
$$
\ulR \in \cRa_\uls(\cN)  \ \ \Rightarrow\ \ \ulR_0 \in \cRa_{\uls_0}(\cN_0).
$$

\end{theorem}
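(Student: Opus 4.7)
The plan is to prove both implications by explicit blocklength-preserving coding reductions between $\cN$ and $\cN_0$. The key observation is that the channel is unchanged ($\cW_0=\cW$) and the side-information is preserved entry-wise ($\hh^{(0)}_{(i,\ula),j}=\hh_{ij}$), so any block-code for one problem can be reinterpreted as a block-code for the other after rearranging the message alphabet. In the forward direction I aggregate the sub-messages of $\cN_0$ into the messages of $\cN$; in the backward direction I concentrate all of $\bm_i$ on the single canonical index $\ula^*_i$ and let every other sub-message be a singleton.

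\textbf{Forward direction ($\cN_0\Rightarrow\cN$).} Given, for every $\e,\delta>0$ and all sufficiently large $n$, an $(\e,\uls_0+\delta,\ulR_0-\delta,n)$ code for $\cN_0$, I use the identity $R_i=\sum_{\ula\in\cA_i} R^{(0)}_{(i,\ula)}$ to identify a uniform $\bm_i\in[2^{R_in}]$ bijectively with an independent uniform tuple $(\bm_{(i,\ula)}:\ula\in\cA_i)$ with $\bm_{(i,\ula)}\in[2^{R^{(0)}_{(i,\ula)}n}]$, and feed the tuple into the $\cN_0$ encoder. At any $v_j$ with $\ss_{ij}=1$ and $\sigma_{ij}=\sigma_\lambda$, the defining condition of $\cA_i$ forces $a_j\le\lambda$ for \emph{every} $\ula\in\cA_i$, so $\ss^{(0)}_{(i,\ula),j}=1$ and the $\cN_0$ decoder recovers $\bm_{(i,\ula)}$ at $v_j$ by time $\sigma_{a_j}n\le\sigma_\lambda n=\sigma_{ij}n$. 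Concatenating the recovered sub-messages reconstructs $\bm_i$ at $v_j$ on schedule, and because the joint success event is exactly the event guaranteed by the $\cN_0$ code, the overall error is still at most $\e$.

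\textbf{Backward direction ($\cN\Rightarrow\cN_0$).} Given an $(\e,\uls+\delta,\ulR-\delta,n)$ code for $\cN$, I set $\bm_{(i,\ula^*_i)}:=\bm_i$ for the canonical $\ula^*_i$ of the statement; every other $\ula\in\cA_i$ has $R^{(0)}_{(i,\ula)}=0$, so $\cM_{(i,\ula)}$ is a singleton that any decoder outputs trivially. Running the $\cN$ code on the $\bm_i$, any demand $\ss^{(0)}_{(i,\ula),j}=1$ falls into one of two cases: if $\ula=\ula^*_i$ then $a^*_j=\lambda$ for some $\lambda\in[\JJ]$, which by construction of $\ula^*_i$ means $\ss_{ij}=1$ with $\sigma_{ij}=\sigma_\lambda$, so the $\cN$ code delivers $\bm_i=\bm_{(i,\ula^*_i)}$ at $v_j$ by time $\sigma_\lambda n=\sigma^{(0)}_{(i,\ula^*_i),j}n$; if $\ula\ne\ula^*_i$ the singleton is decoded for free.

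\textbf{Main obstacle.} The one delicate point is the $(\e,\delta,n)$ bookkeeping mandated by Definition~\ref{def:timerate}: $\cRa_\uls$ is defined as a closure requiring, for every $\e,\delta>0$ and all large $n$, an $(\e,\uls+\delta,\ulR-\delta,n)$ code. Both reductions preserve $n$ and $\e$ verbatim and preserve $\uls$, $\ulR$ exactly, but the bijection $\cM_i\leftrightarrow\prod_{\ula\in\cA_i}\cM_{(i,\ula)}$ is only approximate due to the $\lfloor 2^{Rn}\rfloor$ flooring in the alphabet sizes, losing $o(1/n)$ in a few sub-message rates. This loss is absorbed into a slightly inflated slack $\delta'>\delta$; letting $n\to\infty$ and then $\delta'\to\delta$ closes both implications inside the closures defining $\cRa_\uls(\cN)$ and $\cRa_{\uls_0}(\cN_0)$, which is the statement of the theorem.
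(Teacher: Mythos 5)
Your proposal is correct and follows essentially the same route as the paper's proof: in the forward direction you reuse the $\cN_0$ code verbatim, viewing $m_i$ as the concatenation of its sub-messages and noting that the definition of $\cA_i$ forces every demanded part to be decoded by time $\sigma_{ij}n$; in the backward direction you reuse the $\cN$ code with all rate concentrated on the canonical index $\ula^*_i$ whose decoding times coincide with those of $m_i$. Your extra care with the $(\e,\delta,n)$ closure bookkeeping is a harmless refinement of what the paper leaves implicit.
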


\begin{proof}
Let $\cR_0 = (R^{(0)}_{(i,\ula)} : i \in [k], \ula \in \cA_i)$ satisfy $R_i=\sum_{\ula}R^{(0)}_{(i,\ula)}$.
To prove that $\ulR_0 \in \cRa_{\uls_0}(\cN_0)$ implies $\ulR \in \cRa_\uls(\cN)$,
consider any $(\e,\uls_0,\ulR_0,n)$ code $\cC_0$ for $\cN_0$. 
Using the exact same code on $\cN$ where each message $m_i$ of $\cN$ is taken to be the concatenation of the messages $(m_{(i,\ula)} : \ula \in \cA_i)$ of $\cN_0$ results in an  $(\e,\uls,\ulR,n)$ code $\cC$ for $\cN$. 
Specifically, by our definition of $\cA_i$ and $\uls_0$, if $\ss_{ij}=1$ then each part $m_{(i,\ula)}$ of message $m_i$ is decoded by $v_j$ at or before time $\sigma_{ij}n$.

For the other direction, consider any $(\e,\uls,\ulR,n)$ code $\cC$ for $\cN$.
Let $\ulR_0$ be as defined in the theorem statement.
For all $R^{(0)}_{(i,\ula)}$ that equal $R_i$, message $m_{(i,\ula)}$ has decoding times identical to those of $m_i$ in $\cN$, and thus can be communicated in $\cN_0$ using $\cC$.
Moreover, all other $R^{(0)}_{(i,\ula)}$ equal $0$.
Thus, the exact same code $\cC$ is an $(\e,\uls_0,\ulR_0,n)$ code for $\cN_0$ as well.
\end{proof}
\vspace{2mm}

\noindent
{\bf $\bullet$ Expanding $\cN_0$:}
Let $\cN=(\cW,\ulM,\HH,\SS)$ and $\cN_0=(\cW_0,\ulM_0,\HH_0,\SS_0)$ be as defined above.
We now present an expansion of $\cN_0$ to network problems $\cN_1,\dots,\cN_\JJ$.
As with the definition of $\cN_0$, the expansion depends on $\uls$.
Let $\JJ$ be the number of distinct values in $\uls$, and let $\sigma_1,\dots,\sigma_\JJ$ and $\Delta(1),\dots,\Delta(\JJ+1)$ be defined as before.
Our goal in defining networks $\cN_1,\dots,\cN_\JJ$ is to better capture the communication process over $\cN_0$.
The networks $\cN_\lambda$ for $\lambda \in [\JJ]$ differ from $\cN_0$ only in the requirements $\SS_\lambda$ and in the side-information $\HH_\lambda$.


Namely,
\begin{itemize}
\item For $\lambda \in [\JJ]$, $\cN_\lambda=(\cW_0,\ulM_0,\HH_\lambda,\SS_\lambda)$.
\item Define matrix $\SS_\lambda$ with entries $\ss^{(\lambda)}_{(i,\ula),j}$ as follows.
For $\lambda \in [\JJ]$, $i \in [k]$, $\ula \in \cA_i$, and $j \in [\ell]$, let $\ss^{(\lambda)}_{(i,\ula),j}=1$ for all pairs $((i,\ula),j)$ with $\ula=(a_1,\dots,a_\ell)$ such that $a_j=\lambda$. 
Otherwise set $\ss^{(\lambda)}_{(i,\ula),j}=0$. 
Notice that $$\sum_{\lambda=1}^\JJ\SS_\lambda = \SS_0.$$
\item Define matrix $\HH_\lambda$ with entries $\hh^{(\lambda)}_{(i,\ula),j}$ as follows.
For $\lambda \in [\JJ]$, $i \in [k]$, $\ula \in \cA_i$, and $j \in [\ell]$,
$$
 \hh^{(\lambda)}_{(i,\ula),j} = \hh^{(0)}_{(i,\ula),j} \cup \sum_{\lambda'=1}^{\lambda-1}\ss^{(\lambda')}_{(i,\ula),j}.
$$
Here, `$\cup$', represents binary logical {\tt OR}.
Namely, in $\cN_\lambda$, node $v_j$ holds all sub-messages $m_{i,\ula}$ that it holds in $\cN_0$ and, in addition, it holds all sub-messages $m_{i,\ula}$ that are required to be decoded at $v_j$ in $\cN_0$ before time $\sigma_\lambda$.
\end{itemize}


\begin{theorem}
\label{the:N_lambda}
Let $\cN=(\cW,\ulM,\HH,\SS)$.
Let $\uls = (\sigma_{ij}: \ss_{ij}=1)$ be a collection of time parameters.
Let $\cN_0=(\cW_0,\ulM_0,\HH_0,\SS_0)$ and $\uls_0$ be defined as above.
Let $\JJ$ be the number of distinct values $\sigma_1< \sigma_2 < \dots<\sigma_\JJ$ in $\uls$.
Let $\sigma_0=0$.
For $\lambda \in [\JJ]$, let $\cN_\lambda=(\cW_0,\ulM_0,\HH_\lambda,\SS_\lambda)$ be defined as above. 
Let $\ulR_0 = (R^{(0)}_{(i,\ula)} : i \in [k], \ula \in \cA_i)$ be a rate vector.
For $\lambda \in [\JJ]$, let $\ulR_\lambda = (R^{(\lambda)}_{(i,\ula)} : i \in [k], \ula \in \cA_i)$ be defined by 
$R^{(\lambda)}_{(i,\ula)}=R^{(0)}_{(i,\ula)}$ if there exists $j \in [\ell]$ such that $\ss^{(\lambda)}_{(i,\ula),j}=1$ and $R^{(\lambda)}_{(i,\ula)}=0$ otherwise.
If for $\lambda \in [\JJ]$ it holds that 
$$
\frac{\ulR_\lambda}{\sigma_\lambda-\sigma_{\lambda-1}} \in  \cTone(\cN_\lambda),
$$
then
$$
 \ulR_0 \in \cRa_{\uls_0}(\cN_0).
$$
\end{theorem}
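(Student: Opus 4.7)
The strategy is to partition the $n$ channel uses of the combined code into $\JJ$ consecutive phases, where phase $\lambda$ spans channel uses $\sigma_{\lambda-1}n+1,\ldots,\sigma_\lambda n$ (with $\sigma_0=0$), and to run a standard capacity-achieving code for $\cN_\lambda$ during phase $\lambda$. By the definition of $\SS_\lambda$, the sub-messages that must be delivered at some receiver at time exactly $\sigma_\lambda n$ are precisely those in the support of $\ulR_\lambda$ — the $m_{(i,\ula)}$ with at least one coordinate $a_j=\lambda$ — so this allocation lines up with the deadlines encoded in $\uls_0$. Note that one sub-message $m_{(i,\ula)}$ may be transmitted in several phases, once for each distinct value appearing among $a_1,\ldots,a_\ell$ in $[\JJ]$, which is exactly what is required so that \emph{each} receiver $v_j$ with $a_j=\lambda$ obtains it by time $\sigma_\lambda n$.

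Fix $\e,\delta>0$. For every $\lambda$, the hypothesis $\ulR_\lambda/(\sigma_\lambda-\sigma_{\lambda-1})\in\cTone(\cN_\lambda)$ supplies, for all sufficiently large blocklengths $n'$, a standard block code on $\cN_\lambda$ of rate within $O(\delta)$ of $\ulR_\lambda/(\sigma_\lambda-\sigma_{\lambda-1})$ and error at most $\e/\JJ$. Applying Claim~\ref{claim:trade} with $\alpha=\sigma_\lambda-\sigma_{\lambda-1}$ rescales this into a code that occupies exactly $(\sigma_\lambda-\sigma_{\lambda-1})n$ channel uses, for a common $n$ chosen large enough to serve all $\JJ$ phases simultaneously, with decoding at the final time step of each phase. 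Build $\cC_0$ for $\cN_0$ by running these $\JJ$ rescaled codes sequentially; during phase $\lambda$, node $v_j$ executes the $\cN_\lambda$ encoder/decoder, operating on the $\HH_0$-side-information augmented by every sub-message $v_j$ has already decoded in phases $1,\ldots,\lambda-1$. The identity
$$
\hh^{(\lambda)}_{(i,\ula),j} \;=\; \hh^{(0)}_{(i,\ula),j} \;\cup\; \bigcup_{\lambda'<\lambda} \ss^{(\lambda')}_{(i,\ula),j}
$$
built into the definition of $\HH_\lambda$ says exactly that the augmented information set, conditioned on correct decoding in earlier phases, matches the side-information that $\cN_\lambda$ was defined with. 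A union bound over $\lambda\in[\JJ]$ bounds total failure probability by $\e$, and $\cC_0$ delivers $m_{(i,\ula)}$ at rate $R^{(0)}_{(i,\ula)}-O(\delta)$ while meeting every deadline in $\uls_0$ up to an $O(\delta)$ slack; taking $\e,\delta\to 0$ and invoking Definition~\ref{def:timerate} yields $\ulR_0\in\cRa_{\uls_0}(\cN_0)$.

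The one subtle step, and the only real technical point, is the composition: $\cC_\lambda$ is designed for $\cN_\lambda$, where its $\HH_\lambda$-side-information is furnished deterministically at time zero, whereas inside $\cC_0$ that same side-information is the random output of earlier phases' decoders. Conditioning on the event that every earlier phase decoded correctly resolves this cleanly: on that event the side-information available to $\cC_\lambda$ coincides with $\HH_\lambda$, the messages it is asked to transmit are fresh uniform random variables, and the phase-$\lambda$ channel is independent of the past by memorylessness, so $\cC_\lambda$ operates in exactly the distribution for which it was analyzed and its conditional error is at most $\e/\JJ$. The remaining issues — integer rounding of $(\sigma_\lambda-\sigma_{\lambda-1})n$, the exchange between rate and time slacks via Claim~\ref{claim:trade}, and partitioning each $m_{(i,\ula)}$'s bits coherently across the phases in which it appears — are routine bookkeeping that can be absorbed into the $\delta$ of Definition~\ref{def:timerate}.
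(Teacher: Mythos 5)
Your proposal mirrors the paper's proof: partition the $n$ channel uses into phases $(\sigma_{\lambda-1}n, \sigma_\lambda n]$, run a $\cTone(\cN_\lambda)$-achieving code rescaled via Claim~\ref{claim:trade} in each phase, concatenate the encoders/decoders, and union-bound the per-phase errors. Two small points. First, asserting that the phase-$\lambda$ \emph{conditional} error given earlier success is at most $\e/\JJ$ is not immediate, since conditioning on earlier-phase success can skew the message distribution; the clean route (which the paper's induction takes implicitly, and which your union bound ultimately requires) is to bound the \emph{unconditional} probability of the event ``phase $\lambda$'s decoder fails when fed the correct $\HH_\lambda$-side-information'' by $\e/\JJ$, and then note that the overall failure event is contained in the union of these $\JJ$ events. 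Second, your closing remark about ``partitioning each $m_{(i,\ula)}$'s bits coherently across the phases in which it appears'' is misleading: the full message $m_{(i,\ula)}$ is re-transmitted at its full rate $R^{(0)}_{(i,\ula)}$ in every phase $\lambda$ for which some $a_j=\lambda$, with nodes that already decoded it in earlier phases treating it as $\HH_\lambda$-side-information --- its bits are not split across phases.
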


\begin{proof}
Let $n$ be an integer to be determined later.
Let $n_\lambda = n(\sigma_\lambda-\sigma_{\lambda-1})$.
For $\lambda \in [\JJ]$, consider any collection of $(\e,{\bf \underline{1}},\frac{\ulR_\lambda}{\sigma_\lambda-\sigma_{\lambda-1}} ,n_\lambda)$ codes $\cC_\lambda=(\cE_\lambda,\cD_\lambda)$ for $\cN_\lambda$. 
Here, we take $n$ to be as large as needed to ensure that $n_\lambda$ for each $\lambda \in [\JJ]$ in turn is sufficiently large to allow the existence of codes $\cC_1,\dots,\cC_\JJ$.
By Claim~\ref{claim:trade}, our definitions imply for $\lambda \in [\JJ]$ that $\cC_\lambda$ is also an $(\e,(\sigma_\lambda-\sigma_{\lambda-1}){\bf \underline{1}},\ulR_\lambda,n)$ code for $\cN_\lambda$.

For $\lambda \in [\JJ]$, let $\cE^*_\lambda =(\cE_\lambda^\tau: \tau \in[(\sigma_\lambda-\sigma_{\lambda-1})n])$ represent the first $(\sigma_\lambda-\sigma_{\lambda-1})n$ encoders in $\cC_\lambda$. Similarly for  $\cD^*_\lambda$. The code $\cC_0=(\cE_0,\cD_0)$ for $\cN_0$ is defined to consist of the concatenations $\cE_0=\cE^*_1 \circ \cE^*_2 \circ \dots \circ \cE^*_\JJ$ and $\cD_0=\cD^*_1 \circ \cD^*_2 \circ \dots \circ \cD^*_\JJ$. 

We now show by induction that $\cC_0$ is an $(\JJ\e,\uls_0,\ulR_0,n)$ code for $\cN_0$.
First consider all messages $m_{(i,\ula)}$ and nodes $v_j$ such that for $\lambda=1$, $\ss^{(\lambda)}_{(i,\ula),j}=1$ and thus $\sigma^{(0)}_{(i,\ula),j}=\sigma_\lambda=\sigma_1$.
By our definitions of $\cC_1$ and $\cN_1$, with error probability at most $\e$, for all such pairs $((i,\ula),j))$, message $m_{(i,\ula)}$ is decoded successfully by $v_j$ after $\sigma_{1} n=(\sigma_1-\sigma_0)n$ time steps of $\cC_0$. 
This implies, for $\lambda=1$, that at time $\sigma_\lambda n$,  if $\hh^{(\lambda+1)}_{(i,\ula),j}=1$ then $v_j$ holds  message $m_{(i,\ula)}$.
Recall that,
$$
\hh^{(\lambda+1)}_{(i,\ula),j} = \hh^{(0)}_{(i,\ula),j} \cup \sum_{\lambda'=1}^{\lambda}\ss^{(\lambda')}_{(i,\ula),j}.
$$


%
 We continue by induction.
 Assume, for $\lambda-1$, that after $\sigma_{\lambda-1} n$ time steps of $\cC_0$, with error probability at most $(\lambda-1)\e$ if $\hh^{(\lambda)}_{(i,\ula),j}=1$ then $v_j$ holds  message $m_{(i,\ula)}$.
 We wish to prove the corresponding statement for $\lambda$.
As with the base case of $\lambda=1$, consider all messages $m_{(i,\ula)}$ and nodes $v_j$ such that $\ss^{(\lambda)}_{(i,\ula),j}=1$ and thus $\sigma^{(0)}_{(i,\ula),j}=\sigma_\lambda$.
By induction, after $\sigma_{\lambda-1}n$ time steps of $\cC_0$,  with error probability at most $(\lambda-1)\e$, if $\hh^{(\lambda)}_{(i,\ula),j}=1$ then node $v_j$ holds  message $m_{(i,\ula)}$.
Code $\cC_0$ in time steps $(\sigma_{\lambda}-\sigma_{\lambda-1})n$ employs the encoder and decoder of $\cC_\lambda$, which is an $(\e,(\sigma_\lambda-\sigma_{\lambda-1}){\bf \underline{1}},\ulR_\lambda,n)$ code for $\cN_\lambda$. 
By our definitions, during those time steps of $\cC_0$, with error probability at most $\e$, for all pairs $((i,\ula),j))$ for which $\ss^{(\lambda)}_{(i,\ula),j}=1$ message $m_{(i,\ula)}$ is decoded by $v_j$. 
Thus by the union bound, with overall error probability of at most $\lambda \e$, after the first $\sigma_\lambda n$ time steps of $\cC_0$, if $\hh^{(\lambda+1)}_{(i,\ula),j}=1$ then $v_j$ holds  message $m_{(i,\ula)}$.
%

Continuing this process until $\lambda=\JJ$, and using our definitions of $\SS_0$, $\SS_\lambda$, $\HH_\lambda$, and $\ulR_\lambda$ for $\lambda \in [\JJ]$, we conclude that, employing $\cC_0$, with probability at least $1-\JJ\e$ all nodes $v_j$  decode their required information according to $\uls_0$. Thus, $\cC_0$ is an $(\JJ\e,\uls_0,\ulR_0,n)$ code for $\cN_0$.
\end{proof}
Combining Theorems~\ref{the:N0} and \ref{the:N_lambda} now implies an inner bound to $\cRa_\uls(\cN)$ derived from $(\cTone(\cN_\lambda) : \lambda \in [\JJ])$.
\begin{remark}
In the design of $\cN_0,\cN_1,\dots,\cN_\JJ$, the message set $\ulM_0$ included messages $(m_{i,\ula} : i \in [k], \ula \in \cA_i)$, where for each $i \in [k]$, in the proof of Theorem~\ref{the:N0}, the collection $(m_{i,\ula} : \ula \in \cA_i)$ is considered to be a partition of the original message $m_i \in \ulM$ of $\cN$.
We note that partitioning the message $m_i$ into {\em more} sub-messages does not imply a stronger inner-bound on $\cT(\cN)$. 
Specifically, consider a partition $(m_{i,b} : b \in \cB_i)$ for some index set $\cB_i$ for which $|\cB_i| > |\cA_i|$. 
By the pigeonhole principle, no matter how we set the decoding times for messages $(m_{i,b} : b \in \cB_i)$, there exist at least two messages $m_{i,b_1}$ and $m_{i,b_2}$ with identical corresponding decoding times (at each of the nodes $v_j \in V$). For the purposes of the proofs of Theorems~\ref{the:N0} and \ref{the:N_lambda}, these messages can be merged into a single one. Continuing in this manner, one can formally  show that any rate achievable using  the proofs of Theorems~\ref{the:N0} and \ref{the:N_lambda} with the message set $(m_{i,b} : i \in [k], b \in \cB_i)$ can also be achieved using the original message set $(m_{i,\ula} : i \in [k], \ula \in \cA_i)$.

Moreover, in the design of $\cN_0,\cN_1,\dots,\cN_\JJ$ and in the proofs of Theorems~\ref{the:N0} and \ref{the:N_lambda}, for a given $\uls$, we consider $\JJ$ rounds of communication in which round $\lambda \in [\JJ]$ is done over $\cN_\lambda$. Here, $\JJ$ equals the number of distinct time-parameters in $\uls$.
One could consider increasing the number of rounds of communication beyond the number of distinct time-parameters in $\uls$ (and modifying the message set $\ulM_0$ and the networks $\cN_0,\cN_1,\dots$ accordingly). 
We note, similar to the discussion above, that modifying Theorems~\ref{the:N0} and \ref{the:N_lambda} in this manner also does not yield a better inner-bound on $\cT(\cN)$.

Specifically, using the notation of Theorems~\ref{the:N0} and \ref{the:N_lambda} and their proofs,  consider, for example, partitioning a specific round $\lambda$ into two phases, denoted $\lambda^{(1)}$ and $\lambda^{(2)}$, the first phase taking place during time-steps $n(\sigma_{\lambda^{(1)}}-\sigma_{\lambda-1})$ of $\cC_0$ and the second during time steps  $n(\sigma_\lambda-\sigma_{\lambda^{(1)}})$ for a new parameter $\sigma_{\lambda^{(1)}} \in (\sigma_{\lambda-1},\sigma_\lambda)$. 
Consider also the corresponding modifications needed in the proofs of Theorems~\ref{the:N0} and \ref{the:N_lambda}, including refining the  message set $\ulM_0$, the sets $\cA_i$ for $i \in [k]$, and replacing $\cN_\lambda$ by two network communication problems $\cN_{\lambda^{(1)}}$ and $\cN_{\lambda^{(2)}}$.
By merging messages $m_{i,\ula}$ and $m_{i,\ula'}$ with $\ula=(a_1,\dots,a_\ell)$ and $\ula'=(a'_1,\dots,a'_\ell)$ that differ only in locations $j \in [\ell]$ for which $a_j=\lambda^{(1)}$ and $a'_j = \lambda^{(2)}$ or vice versa,
it can be shown that any achievable rate for $\cN_0$ in the modified proof of Theorem~\ref{the:N_lambda} can also be obtained in the original proof. 
\end{remark}

\section{Theorem~\ref{the:N_lambda} is not tight}
\label{sec:results_example}
It may not come as a surprise to the reader that Theorem~\ref{the:N_lambda} is not tight 
(even for {\em single-message} 
networks $\cN$).
For completeness, we present a single-message network communication problem $\cN$ for which Theorem~\ref{the:N_lambda} is not tight. 
Consider the memoryless degraded 2-terminal broadcast channel characterized by $W(y_1,y_2|x)$ where $y_1=x$ is the binary identity channel, and $y_2=BEC_{0.5}(x)$ corresponds to the binary erasure channel in which $y_2=x$ with probability $0.5$ and $y_2$ is an erasure symbol $\perp$ otherwise.  Namely, $\cW$ is defined by three nodes that are denoted here as the encoder $u$ and two receivers $t_1$ and $t_2$.
The message set $\ulM$ includes a single message $m$, available at node $u$ and required by nodes $t_1$ and $t_2$.
Consider the setting in which $\uls = (\sigma_1,\sigma_2)$, where $\sigma_1=0.5$ is the decoding time required at $t_1$ and $\sigma_2=1$ the decoding time at $t_2$.
As in Section~\ref{sec:static}, for clarity of presentation, we use the common notation for the broadcast channel, instead of that given in Section~\ref{sec:model}.

On one hand, using the characterization of \cite{shulman2000static,Shulman03} presented in Section~\ref{sec:static}, it holds that $\cRa_\uls(\cN)=[0,0.5]$.
The optimal rate $R=0.5$ is obtained, for example, 
using a codebook chosen uniformly at random, i.e., using the uniform distribution $\bx$ over $\{0,1\}$ in the characterization 
\begin{align*}
R &\leq \sigma_{1} I(\bx;\by_1)=\sigma_1\cdot 1=0.5,\\
R &\leq \sigma_{1} I(\bx;\by_2)+(\sigma_{2}-\sigma_{1})C_2=\sigma_1 \cdot 0.5+0.25=0.5
\end{align*}

On the other hand, applying Theorems~\ref{the:N0} and \ref{the:N_lambda}, the message set $\ulM_0$ of $\cN_0$ includes two messages $m_{(1,1)}$ and $m_{(1,2)}$ with corresponding decoding times $\uls_0$ of $\sigma^{(0)}_{(1,1),1}=\sigma^{(0)}_{(1,1),2}=\sigma^{(0)}_{(1,2),1}=\sigma_1=0.5$ and $\sigma^{(0)}_{(1,2),2}=\sigma_2=1$. 
Consider any $\ulR_0 = (R^{(0)}_{(1,1)},R^{(0)}_{(1,2)}) \in \cRa_{\uls_0}(\cN_0)$.
By Theorem~\ref{the:N0}, it holds that the sum-rate $R^{(0)}_{(1,1)} + R^{(0)}_{(1,2)}$ can be as large as $R=0.5$.
We now show that for any $\ulR_0^*= (R^{(*)}_{(1,1)},R^{(*)}_{(1,2)})$ satisfying the conditions of Theorem~\ref{the:N_lambda} it holds that $R^{(*)}_{(1,1)} + R^{(*)}_{(1,2)}<0.5$, implying that Theorem~\ref{the:N_lambda} is not tight.

Consider the network communication problem $\cN_1$ corresponding to the first phase of communication in $\cN_0$.
Namely, $\cN_1$ takes into consideration the requirements $\ss^{(1)}_{(1,1),1}=\ss^{(1)}_{(1,1),2}=\ss^{(1)}_{(1,2),1}=1$, and $\ss^{(1)}_{(1,2),2}=0$.
It can be seen that 
$2R^{(1)}_{(1,1)}+R^{(1)}_{(1,2)} \leq 0.5$, for any $\frac{\ulR_1}{\sigma_1}=2\ulR_1=(2R^{(1)}_{(1,1)},2R^{(1)}_{(1,2)}) \in \cTone(\cN_1)$.
This follows from studying the capacity of the (degraded) broadcast channel $W$ with common message $m_{(1,1)}$ and private message $m_{(1,2)}$ at terminal $t_1$.
By our definitions in Theorem~\ref{the:N_lambda} it holds that $(R^{(1)}_{(1,1)},R^{(1)}_{(1,2)})=(R^{(*)}_{(1,1)},R^{(*)}_{(1,2)})$.
Thus, either $R^{(*)}_{(1,1)}=0$, or alternatively, $R^{(*)}_{(1,1)}+R^{(*)}_{(1,2)}<0.5$.
In the latter, we conclude that Theorem~\ref{the:N_lambda} is not tight.
For the former, we further study $R^{(*)}_{(1,2)}$. This time we consider $\cN_2$ and see that $R^{(2)}_{(1,2)}=R^{(*)}_{(1,2)} \in (\sigma_2-\sigma_1)\cTone(\cN_2)=\frac{1}{2}\cTone(\cN_2)$ implies $R^{(*)}_{(1,2)} \leq 0.25$. Which in turn implies that $R^{(*)}_{(1,1)}+R^{(*)}_{(1,2)} \leq 0+0.25<0.5$ and thus that Theorem~\ref{the:N_lambda} is not tight in this case as well.

\section{Conclusions}
\label{sec:conclude}
In this work we generalize the standard notion of capacity by studying the time-rate region $\cT$ of discrete memoryless networks $\cN$.
We present an inner bound on $\cT$ based on the concatenation of a series of block-codes corresponding to a time-expansion of $\cN$.
Improving on these bounds, for general networks $\cN$, or for specific network components or network communication settings, is the subject to future work.


\bibliographystyle{unsrt}
\bibliography{proposal,online_rateless}
\end{document}